\newtheorem{theorem}{Theorem}
\newtheorem{lemma}{Lemma}
\newtheorem{condition}{Condition}
\newcommand{\todo}[1]{\TODOsLETFINNONDRAFTDOCUMENT}
\newcommand{\todo}[1]{{\bf \textcolor{red}{#1}}}
\newcommand{\etal}{et al.}
\def\xor{\oplus}
\def\scand{\text{\&\&}}
\def\scor{||}
\def\Section#1{Section~\ref{#1}}
\newcommand{\powerset}[1]{2^{#1}}
\newcommand*{\Cdot}{\raisebox{-0.25ex}{\scalebox{1.6}{.}}}
\newcommand{\where}[1][1]{\Cdot\ } 
\DeclareFontFamily{U}{mathx}{}
\DeclareFontShape{U}{mathx}{m}{n}{
      <5> <6> <7> <8> <9> <10>
      <10.95> <12> <14.4> <17.28> <20.74> <24.88>
      mathx10
      }{}
\DeclareSymbolFont{mathx}{U}{mathx}{m}{n}
\DeclareMathSymbol{\bigtimes}{1}{mathx}{"91}
\begin{document}

\title{Annotating Control-Flow Graphs for Formalized Test Coverage Criteria}

\ifx\undefined\ANONYMIZED

\author{\IEEEauthorblockN{Sean Kauffman}
\IEEEauthorblockA{Electrical and Computer Engineering\\
Queen's University\\
Kingston, Ontario, Canada\\
Email: sean.k@queensu.ca}
\and
\IEEEauthorblockN{Carlos Moreno}
\IEEEauthorblockA{Electrical and Computer Engineering\\
University of Waterloo\\
Waterloo, Ontario, Canada\\
Email: cmoreno@uwaterloo.ca}
\and
\IEEEauthorblockN{Sebastian Fischmeister}
\IEEEauthorblockA{Electrical and Computer Engineering\\
University of Waterloo\\
Waterloo, Ontario, Canada\\
Email: sfischme@uwaterloo.ca}}

\else

\author{\\\;\\Anonymized Author(s)\\\;}

\fi

\maketitle

\begin{abstract}
Control flow coverage criteria are an important part of the process of qualifying embedded software for safety-critical systems.  Criteria such as modified condition/decision coverage (MC/DC) as defined by DO-178B are used by regulators to judge the adequacy of testing and by QA engineers to design tests when full path coverage is impossible.

Despite their importance, these coverage criteria are often misunderstood.  One problem is that their definitions are typically written in natural language specification documents, making them imprecise.  Other works have proposed formal definitions using binary predicate logic, but these definitions are difficult to apply to the analysis of real programs.  Control-Flow Graphs (CFGs) are the most common model for analyzing program logic in compilers, and seem to be a good fit for defining and analyzing coverage criteria.  However, CFGs discard the explicit concept of a decision, making their use for this task seem impossible.

In this paper, we show how to annotate a CFG with decision information inferred from the graph itself.  We call this annotated model a Control-Flow Decision Graph (CFDG) and we use it to formally define several common coverage criteria.  We have implemented our algorithms in a tool which we show can be applied to automatically annotate CFGs output from popular compilers.

\end{abstract}

\section{Introduction}
\label{sec:introduction}

Control flow coverage criteria are an important part in the process of qualifying embedded software for safety-critical systems.  Standards such as DO-178B/C in aerospace systems and ISO-26262 in the automotive domain prescribe modified condition/decision coverage (MC/DC) as a prerequisite for software test design.
Coverage criteria are used during software qualification to guide test design and demonstrate that no part of the code base is overlooked.

Despite their importance, coverage criteria are often misunderstood.  
One problem is that their definitions are typically written in natural language specification documents, making them imprecise.  
Other works have proposed formal definitions using binary predicate logic or \(Z\) notation, but these definitions can be difficult to apply to the analysis of real programs.
One of the most common representations of program control flow is the Control-Flow Graph (CFG), which is used by most compilers for optimization and analysis, and is easy to represent visually.
As such, modeling coverage criteria using CFGs seems like a natural fit to improve user comprehension and simplify automated measurement.

Unfortunately, CFGs do not explicitly retain a concept that is crucial to measuring control-flow coverage criteria: \emph{decisions}.
Decisions represent potentially complex branch conditions in programs, and they may be represented as multiple vertices in a CFG.
To define coverage criteria using a CFG model and represent it for easier human and machine comprehension, decisions need to be made explicit. 

In this paper, we propose an intermediate representation of program logic that extends CFGs to represent decisions explicitly called Control-Flow Decision Graphs (CFDGs).  We show how to automatically transform a CFG into a CFDG and we use this structure to formalize common control flow test coverage criteria.
Our method is implemented as a tool that can be used to annotate CFGs output by the popular GCC and Clang compilers in \textit{dot} format, allowing them to be visually compared.

The remainder of the paper is structured as follows: 
\Section{sec:related} discusses related work.  
\Section{sec:definitions} defines terms and notation used in the paper.
\Section{sec:motivation} presents the context of the problem and why it is important.
\Section{sec:model} defines a Control-Flow Decision Graph.
\Section{sec:criteria} uses the model to formally define test coverage criteria.
\Section{sec:tool} introduces a tool to annotate CFGs output by popular compilers.
\Section{sec:conclusion} concludes the paper.

\section{Related Work}
\label{sec:related}

Previous works have formalized coverage criteria requirements for comparison purposes.  Comar \etal{} formalized MC/DC coverage using Branch Decision Diagrams to compare with Object Branch Coverage~\cite{couverture}.  Vilkomir and Bowen used \(Z\) notation to formalize and compare traditional test coverage criteria including MC/DC in~\cite{vilkomir2001formalization,vilkomir2001formalization2}, and added RC/DC in~\cite{formalization}.  They applied the same techniques to formalize other regulatory requirements in the \(Z\) notation in~\cite{vilkomir2001application}.  Woodward and Hennell compared JJ-Path, or LCSAJ, coverage to MC/DC and defined when JJ-Path subsumes MC/DC~\cite{jjpaths}.  Kirner proposed a formal model for MC/DC to be used to show that transformation and compilation do not result in coverage violations~\cite{preserving}.  Ammann \etal{} formalized a comprehensive set of criteria using boolean predicate logic~\cite{ammann2003coverage}.  Kapoor and Bowen used boolean logic to compare the fault detection effectiveness of MC/DC and RC/DC~\cite{kapoor2005formal}.  Kosmatov \etal{} formalized data-oriented boundary testing heuristics for a new family of model-based coverage criteria~\cite{kosmatov2004boundary}.

Other works have formalized coverage criteria for automatic test case generation.  In 1994, Jasper \etal{} proposed using their Ada framework (TSDT) to generate test cases that met MC/DC requirements~\cite{jasper1994test}.  They modeled the criterion using two functions on tuples of boolean expressions and vectors.  Ghani and Clark proposed using search-based test generation to meet MC/DC requirements~\cite{ghani2009automatic}.  They used an AST-based emitter to export decisions in disjunctive normal form (DNF), then used them in cost functions for simulated annealing.  Offutt, Xiong and Liu introduced a method of test code generation coverage criteria such as FPC from SCR condition tables or UML state charts~\cite{offutt1999criteria}.



\section{Definitions and Notation}
\label{sec:definitions}

In this section, we present definitions of terms and notation used throughout the paper.

\subsection{Conditions and Decisions}

A \textit{condition} is an expression defined on a Boolean (\(\mathbb{B}\)) algebra with elements \emph{true} and \emph{false}.
A \textit{decision} is a function from two or more conditions and binary operators to a two-valued boolean algebra \(\mathbb{B}\).  The operators in a decision may be any of the following: normal AND (\(\wedge\)), short-circuit AND (\&\&), normal OR (\(\vee\)), short-circuit OR (\,\(||\)\,), and XOR (\(\xor\)).  For example, the decision \((a \xor b)\) combines two conditions (reading the values of symbols $a$ and $b$) with an XOR.

A decision is equivalent to an \verb+if+ statement in imperative programming, or any equivalent structures that make decisions about the control flow of the program such as a \verb+while+ statement.

\subsection{Test Coverage}

An $\textit{input}$ is a mapping from a symbol in a program to a concrete, true or false, value.  One input can map to one or more conditions.  For example, the decision \((a \xor b \vee a)\) combines three conditions, two of which $a$ are mapped to the same input and one which $b$ is mapped to a second input.

A \emph{test} $t \in \mathbb{B}^*$ is a finite sequence of Boolean values mapping the inputs of a program to concrete values.  Note that we assume a fixed order for program symbols, permitting the omission of explicit symbol mapping.  A \emph{test suite} $T \in \powerset{\mathbb{B}^*}$ is a set of tests.

\subsection{Control Flow Graph}

A \textit{control flow graph} (CFG) of a program is a connected, directed graph \(G = (V, E)\) where each vertex \(v \in V\) has outdegree \(\leq 2\) and represents a program point and each edge \((t,h) \in E\) represents the possibility that execution after the end of the tail (\(t\)) may continue with the beginning of the head (\(h\))~\cite{controlflow}.  The domain of vertices is given by \(\mathcal{V}\), so the domain of CFGs is \(\mathcal{G} = \mathcal{V} \times (\mathcal{V} \times \mathcal{V})\).


Each edge represents possible program flow and each vertex (a program point) is equivalent to a single statement.  A CFG has a single \textit{entry vertex}, given by the function \({\textit{entry} : \mathcal{G} \rightarrow \mathcal{V}}\), which is a special vertex with indegree 0 from which the program flow begins. A CFG has a set of one-or-more \textit{exit vertices}, given by the function \(\textit{exit} : \mathcal{G} \rightarrow \mathcal{V}\), which are special vertices with outdegree 0 at which the program flow ends.  A vertex in a CFG \textit{dominates} another vertex if all paths from the entry vertex to the dominated vertex must pass through the dominating vertex.  The entry vertex dominates every other vertex in the CFG.
A vertex has successor vertices given by the function $\textit{successors} : \mathcal{V} \rightarrow \powerset{\mathcal{V}}$ defined as $\textit{successors}(v) = \{ s \in V : (v,s) \in E \}$.


\subsection{Program Runs}


We define $\textit{Run} : \mathcal{G} \times \mathbb{B}^* \rightarrow E^* \cup E^\omega$ to be a function from a CFG and a test to the sequence of edges in the CFG \((t,h) \in E\) that are traversed. 
The sequence, called a (lowercase) \emph{run}, can be finite or infinite.
Runs returned by \textit{Run} must begin with the entry node and, if finite, end with an exit node.

A run may walk each edge zero or more times.  Walking an edge zero times means that the condition represented by the vertex before the edge never resulted in a choice of that edge when executing the program with the given input.  Edges may be walked more than once due to loops in the program.  A run may be a finite sequence in the case of a terminating program or an infinite sequence in the case of a non-terminating program.

Listing \ref{lst:scand} shows a program with a single decision containing a short-circuit AND operator, and two conditions, each with a separate input.  Figure~\ref{fig:runs} shows three copies of the CFG for this program.  Each of the graphs contains a run of the program, with the executed edges marked as dashed lines and highlighted in red.  Below each graph, the state of both inputs is given.

\begin{lstlisting}[language=C,frame=single,caption=A program with one decision and two conditions,label=lst:scand]
  x = 0;

  if (a && b) {
    x = 1;
  }

  return x;
\end{lstlisting}

For example, in Figure \ref{fig:runs1}, \(a\) is true, so the vertex with \(b\) is reached.  Since \(b\) is false, the vertex with \(x = 1\) is not reached and the program instead continues to the final statement.

\begin{figure}[ht]
\centering
  \subfloat[][$\neg a, b \vee \neg a, \neg b$]{
    \includegraphics[width=.125\textwidth]{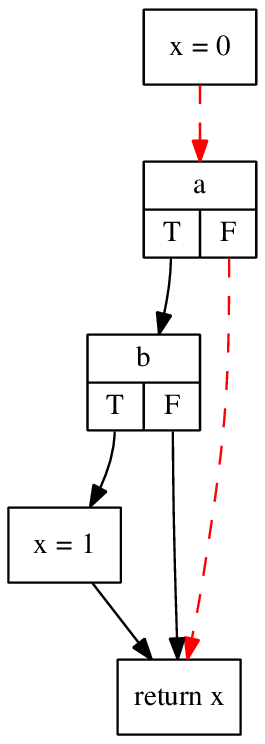}
    \label{fig:runs0}
  }
  \subfloat[][$a, \neg b$]{
    \includegraphics[width=.125\textwidth]{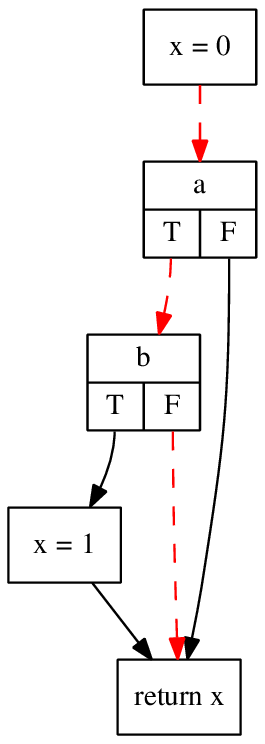}
    \label{fig:runs1}
  }
  \subfloat[][$a, b$]{
    \includegraphics[width=.125\textwidth]{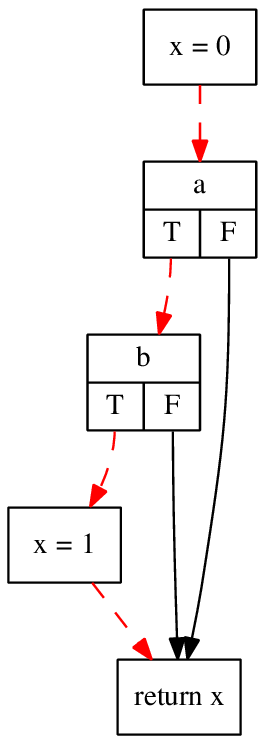}
    \label{fig:runs3}
  }
\caption{Possible runs of the program in Listing \ref{lst:scand}}
\label{fig:runs}
\end{figure}

\section{Motivation}
\label{sec:motivation}

The most widely used and accepted model for program control flow is the CFG.  CFGs are relatively simple but contain all of the necessary information to analyze program control flow properties.  However, CFGs do not explicitly define the concept of a decision, making them difficult to use to model coverage criteria.

Control flow test coverage criteria specify the control flow properties of the tests in a coverage.  They are typically defined in natural language documents which are open to interpretation by engineers and regulators and are often misunderstood and misapplied.  A natural fit for describing and analyzing these coverage criteria is a CFG, but many such criteria rely on the definition of decisions, which CFGs lack.  For example, a commonly used criteria is \emph{decision} coverage, which specifies that executing a test suite must result in the program taking all possible outcomes from every decision.

Prior methods of formalizing control flow test coverages have typically used boolean predicate logic to construct solvers to either test that a coverage meets a criterion or to automatically generate test coverages that do so.  
These efforts are useful in that they have formalized the criteria so that meeting them is unambiguous.
However, these efforts do not facilitate finding other properties of test coverage criteria, or give users other methods to reason about them.
We believe that modifying CFGs to carry decision coverage information provides more natural way to express formal coverage criteria and can help users to reason about those criteria in their test development.

%

One idea that has been proposed is to use object code coverage in place of other criteria when information about decisions is lacking.
CFGs can even be inferred from dynamic analysis of compiled binaries~\cite{rimsa2021cfggrind}.
While it is true that object code (i.e., the assembly-level output of a compiler) is closer to what the system will execute, object code discards information about the intent of the program that may be important and that 
is present in the source code. In particular, object code coverage 
does not change the MC/DC requirements~\cite{couverture}.  The Federal Aviation Administration (FAA) allows the use of object code coverage only when it is shown to be equivalent to source code coverage~\cite{cast}. 
This means that the coverage criteria must use source code, not object code, to define the notion of a decision.

\section{Control-Flow Decision Graphs}
\label{sec:model}

\def\Ds{D_\mathrm{s}}

In this section, we construct a model that can be used to formally describe the control flow of a program and many control-flow test coverage criteria.
We show how to infer decision information from a CFG to construct such a model. 
We later use this model to formalize the test coverage criteria in Section~\ref{sec:criteria}.
Note that, throughout this section, we assume that conditions are represented by single vertices and that decisions contain no extra, interstitial vertices.
We define an interstitial vertex, in this case, as a non-condition vertex that appears between condition vertices within a decision.
In Section~\ref{sec:tool}, we show that these assumptions do not always hold when using CFGs output by certain tools.


To model control flow through a decision in the DFG we can use the common practice from interprocedural analysis~\cite{interprocedural} of adding a subgraph to a CFG which usually represents a called function.  The subgraph of the decision only has one entry edge, leading to one entry vertex which dominates the subgraph.
That there is only one entry vertex is easy to see from the definition of a CFG and a decision.
Incoming edges to a CFG vertex represent the possibility of control flow transitioning to the beginning of that vertex.  

We define a \emph{Control-Flow Decision Graph} (CFDG) as a CFG with an additional set of decision subgraphs that group vertices of the CFG.
We create a CFDG from a CFG by subgraphs of conditions in the \emph{shape} of decisions. 
The CFG is then represented by the CFDG = \(( G, \Ds ) \), where \(G = (V, E)\) is the original CFG, and $\Ds$ is the set of decision subgraphs. 
Once these subgraphs are created we can reason about them independently, and understand their relations.


The idea of inferring decisions from CFGs comes from an observation of their shape.
First, consider that any vertex in a CFG with outdegree 2 is a condition.
All conditions are part of a decision (that may contain only one condition).
Of the five Boolean operators we consider for decisions ($\wedge$, $\scand$, $\vee$, $\scor$, and $\xor$), $\wedge$, $\vee$ and $\xor$ only require one CFG vertex.
This makes sense if one considers the effect that no short-circuiting has on evaluation - both sides of the operator must always be evaluated so there is no alternative control flow that would split the vertex (basic block).
As such, we need only to consider multiple vertices for the short-circuit operators: $\scand$ and $\scor$.

\begin{lemma}[Outdegree of Decision Subgraphs]
A decision can be represented as a control flow subgraph that has two successor vertices.
\label{lem:outdegree}
\end{lemma}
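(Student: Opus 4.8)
The plan is to split the argument along the operator dichotomy the text has just set up. If the decision uses only the non-short-circuiting operators $\wedge$, $\vee$, $\xor$, then by the preceding observation it is a single CFG vertex, whose outdegree is at most $2$ by the definition of a CFG (and exactly $2$ whenever the decision actually influences control flow), so its subgraph trivially has two successor vertices. The substantive case is the short-circuiting one, where the subgraph spans several condition vertices. Here I would fix names for the two control-flow continuations of the decision — call them $v_T$, the vertex reached when the decision evaluates to true, and $v_F$, the vertex reached when it evaluates to false — and show that every edge leaving the subgraph ends at one of these two vertices, so that the out-neighbour set of the subgraph is exactly $\{v_T, v_F\}$.

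To make that precise I would induct on the number of conditions, proving the slightly strengthened statement that a decision subgraph, embedded with a given true-continuation $v_T$ and false-continuation $v_F$, has all of its outgoing edges landing in $\{v_T,v_F\}$. The base case is a single condition vertex, which has outdegree at most $2$ with successors $v_T$ and $v_F$. For the inductive step I would peel off the outermost operator and write the decision as $L\ \mathrm{op}\ R$. If $\mathrm{op}$ is non-short-circuiting, it introduces no branch vertex of its own, so the only vertices come from short-circuiting parts of $L$ and $R$, handled by the inductive hypothesis with suitably inherited continuations. If $\mathrm{op}$ is $\scand$, short-circuit semantics route the false exit of $L$ directly to $R$'s false-continuation (since $L\scand R$ is false when $L$ is false) and the true exit of $L$ into the entry vertex of $R$ — an edge internal to the combined subgraph — while by the inductive hypothesis $R$'s exits are $v_T$ and $v_F$; the $\scor$ case is symmetric with true and false swapped. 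In each case the set of vertices outside the subgraph receiving an edge from inside it is $\{v_T,v_F\}$. A cleaner way to package the same conclusion is the semantic observation that a decision is, by definition, a function into the two-valued algebra $\mathbb{B}$: any run entering the subgraph leaves it to the continuation determined by the computed truth value, so only two exit targets are possible, and both are reachable for a decision that genuinely branches.

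The step I expect to be the main obstacle is the inductive case with freely mixed operators under C-style precedence, for instance $(a\scand b)\scor(c\scand d)$: one has to be careful that the true and false exits of each sub-decision are well defined and that, when a sub-decision is nested, its exits are re-identified with the continuations supplied by the enclosing decision rather than spawning fresh outcome vertices at every level — which is precisely why I would carry the strengthened induction hypothesis above rather than reason about each decision in isolation. A secondary subtlety worth flagging is the degenerate decision whose then- and else-continuations coincide (an \verb+if+ with an empty body and no \verb+else+); such a decision does not actually split control flow, so the statement should be read as ``at most two successor vertices, and exactly two whenever the decision affects control flow,'' or one adopts the convention that $v_T$ and $v_F$ are always distinct in a CFDG.
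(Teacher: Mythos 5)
Your proposal is correct, but you have done considerably more work than the paper does. The paper's entire proof is the one-sentence semantic observation that you relegate to your closing "cleaner way to package" remark: a decision evaluates to a binary result, control continues at one of two vertices depending on that result, so a subgraph with exactly two successors can always be constructed. Your main argument --- the structural induction on the number of conditions, peeling off the outermost operator and tracking where the true- and false-exits of $L\ \mathrm{op}\ R$ land under short-circuit semantics --- is a genuinely different and more constructive route. It buys something real: it actually verifies, operator by operator, that every edge leaving the subgraph terminates at one of the two continuations $v_T$ and $v_F$ (which is the substance of the claim for multi-vertex $\scand$/$\scor$ decisions), and it surfaces the degenerate case where the two continuations coincide, which the paper's proof silently ignores by asserting "exactly two." The cost is length and the bookkeeping you flag about re-identifying nested sub-decisions' exits with the enclosing continuations; the paper avoids all of that by arguing purely from the semantics of a decision as a $\mathbb{B}$-valued function. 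Either argument suffices for the lemma as stated.
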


\begin{proof}
The statement is proved by construction, observing that a decision involves a conditional expression that evaluates to a binary result, and control continues at one of two possible vertices, depending on this binary result.
Thus, we can always construct a subgraph for a decision with exactly two successor vertices.
\end{proof}

\begin{lemma}[Indegree of Decision Subgraphs]
A decision can be represented as a control flow subgraph which is dominated by a single entry vertex.
\label{lem:indegree}
\end{lemma}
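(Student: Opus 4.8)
The plan is to argue constructively, much as in the proof of Lemma~\ref{lem:outdegree}, by examining how a decision's subgraph is assembled, and to rely on the reading of CFG edges recalled just above: an edge into a vertex asserts only that control may transition to the \emph{beginning} of that vertex. Under the standing assumptions of this section, the subgraph of a decision $d$ consists of exactly $d$'s condition vertices --- the operators $\wedge$, $\vee$ and $\xor$, which never short-circuit, introduce no vertices, and the short-circuit operators $\scand$ and $\scor$ only add edges among condition vertices to encode their early exits. Write $c_1, \dots, c_k$ for these condition vertices, listed in the order in which $d$'s operators cause them to be evaluated (left-to-right for the languages we consider), so that $c_1$ is the condition evaluated first whenever control reaches $d$.

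First I would show that $c_1$ is the unique entry vertex of the subgraph, i.e.\ that every edge whose head lies in the subgraph and whose tail lies outside it has head $c_1$. Control reaches $d$ only by beginning to evaluate it, and evaluating $d$ starts with evaluating its first condition $c_1$; there is no program point from which execution resumes partway through the evaluation of $d$, since $d$ is a single syntactic construct (an \texttt{if} statement or equivalent). An edge from outside the subgraph into some $c_i$ with $i \neq 1$ would assert exactly such a forbidden transition, so no such edge exists; hence every edge entering the subgraph has head $c_1$. There may be several of them --- for instance when $d$ heads both a loop back-edge and a fall-through edge --- but they all share the head $c_1$, so the subgraph has a single entry vertex.

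Then I would derive that $c_1$ dominates the subgraph. Inside the subgraph every $c_i$ is reachable from $c_1$ by following the short-circuit and sequential edges, and by the previous step the only in-edges of $c_i$ for $i \neq 1$ come from within the subgraph, while the only in-edges of $c_1$ come from outside it (a decision is a loop-free expression, so no internal edge returns to $c_1$). Consequently any path from $\textit{entry}(G)$ to a subgraph vertex $c_i$ must at some point cross into the subgraph, and the first subgraph vertex on that path can only be $c_1$; therefore $c_1$ lies on every such path and dominates $c_i$. As $i$ is arbitrary and $c_1$ trivially dominates itself, $c_1$ dominates the whole subgraph, which is the claim.

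I expect the delicate step to be the second one --- pinning down that a CFG cannot contain an edge entering a decision anywhere but at its first condition. It is intuitively clear, because a decision is one syntactic unit whose evaluation has a single entry point, but making it rigorous requires appealing to how such constructs are lowered into a CFG (in particular, that control cannot jump into the interior of an expression's evaluation) rather than to the bare graph-theoretic definition of a CFG alone.
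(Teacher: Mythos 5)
Your proof is correct and rests on exactly the same load-bearing fact as the paper's, but it is organized quite differently. The paper proves the lemma by induction on the number of conditions, with the inductive step argued by contradiction: an incoming edge to a non-entry vertex of the subgraph ``would require a label right after the first condition (to be targeted by a \texttt{goto} statement), which is not allowed.'' That is precisely the step you single out as delicate --- that control cannot jump into the interior of a decision's evaluation --- and in the paper, as in your proposal, it is justified by appeal to how syntactic constructs are lowered into a CFG rather than by graph theory alone. Where you differ is in the scaffolding: you give a direct construction, naming the first-evaluated condition $c_1$, showing every edge entering the subgraph has head $c_1$, and then deriving domination from the fact that any path from $\textit{entry}(G)$ must first cross into the subgraph at $c_1$. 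This is arguably cleaner than the paper's version, whose induction hypothesis is never actually invoked in the inductive step (the contradiction there is derived purely from the lexical-rules argument, so the induction is largely vestigial). Your version also makes explicit two points the paper leaves implicit --- that the subgraph may have several incoming \emph{edges} so long as they share a single head, and that decisions are loop-free so no internal edge re-enters $c_1$ --- both of which are needed to pass from ``unique entry vertex'' to ``dominates the subgraph.'' The trade-off is only presentational: the paper's inductive framing mirrors the structure it reuses in Theorem~\ref{thm:successors}, whereas yours is the more elementary and self-contained argument.
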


\begin{proof}
We will prove the statement by induction on the number of conditions. The base case is a decision with one condition. The statement is trivially true in this case since the graph contains a single vertex representing 
the condition.

As induction hypothesis, we assume that the statement is true for decisions consisting of $n$ conditions.

We will show the induction step by contradiction. Consider a decision with $n+1$ conditions and assume, for the purpose of a contradiction, that it is not dominated by a single vertex.

The lexical rules we have established require us to express the decision as a condition combined via a binary operator with a decision with $n$ conditions. 
If the graph for a decision with $n+1$ conditions is not dominated by a single vertex, then there must be an incoming edge to a vertex in the decision subgraph. 
The resulting structure cannot happen with natural conditional 
constructs, and would require a label right after the first condition (to be targeted by a \verb+goto+ statement), which is not allowed.
\end{proof}

\begin{theorem}[Shared Successors of Conditions in a Decision]
Assuming no interstitial vertices \emph{within a decision} in a CFG, the subgraph that corresponds to a decision has one predecessor and two successors, and all conditions in that decision must share a successor with the subgraph.
\label{thm:successors}
\end{theorem}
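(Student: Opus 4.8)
The plan is to dispatch the ``one predecessor, two successors'' part from the two preceding lemmas, and to prove the ``shared successor'' part by induction on the number of conditions.

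First the structural part. By Lemma~\ref{lem:indegree} the decision subgraph has a single dominating entry vertex, and, by the reasoning used when decision subgraphs were introduced, control can enter the subgraph only through that vertex, so the subgraph has a single predecessor. By Lemma~\ref{lem:outdegree} the decision produces a binary result and control continues at one of exactly two vertices, so the subgraph has exactly two successors; I would name them $s_T$ and $s_F$, the targets reached when the decision evaluates to \emph{true} and to \emph{false}, respectively.

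It then remains to show that every condition $c$ of the decision has an out-edge to $s_T$ or to $s_F$. I would prove this by induction on the number $n$ of conditions, reusing the lexical decomposition from the proof of Lemma~\ref{lem:indegree}. For $n = 1$ the subgraph is the single condition vertex, whose two out-edges are precisely the edges to $s_T$ and $s_F$, so the claim holds. For the inductive step, let $D$ be a decision with $n+1$ conditions; by the lexical rules $D$ is a single condition $c_1$ combined through a binary operator with a decision $D'$ on $n$ conditions (so $D = c_1 \mathbin{\mathrm{op}} D'$), and since a non-short-circuiting operator would not split $c_1$'s basic block into two out-edges, the cases are $\mathrm{op} = \scand$ and $\mathrm{op} = \scor$. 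Take $\mathrm{op} = \scand$: if $c_1$ is \emph{false} the value of $D$ is already fixed to \emph{false}, so that out-edge of $c_1$ leads to $s_F$; if $c_1$ is \emph{true}, control passes to the entry vertex of $D'$, and from there $D$ and $D'$ have the same value, so the two successors of $D'$'s subgraph are exactly $s_T$ and $s_F$. Hence $c_1$ has an edge to $s_F$, and by the induction hypothesis every condition of $D'$ has an edge to $s_T$ or $s_F$. The case $\mathrm{op} = \scor$ is the mirror image, with the short-circuit out-edge of $c_1$ leading to $s_T$. This would establish the theorem.

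The step I expect to need the most care is verifying that, once $c_1$ short-circuits, the smaller decision $D'$ inherits exactly the two continuation targets $s_T$ and $s_F$ of $D$: this rests on the case analysis over $\mathrm{op}$ together with how the boolean value propagates through the composition, and it is where decisions that mix $\scand$ and $\scor$ must be checked most carefully. The \emph{no interstitial vertices} hypothesis is exactly what is needed here, since it guarantees that no extra basic block sits on the edge from a condition to $s_T$ or $s_F$ (or between two consecutive conditions), which is what would otherwise let a condition fail to share a successor with the subgraph even though the decision is well formed. Once the decomposition and this target inheritance are pinned down, the induction itself is routine.
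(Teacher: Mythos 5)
Your handling of the ``one predecessor, two successors'' part matches the paper's proof exactly (both just invoke Lemmas~\ref{lem:outdegree} and~\ref{lem:indegree}), and for the shared-successor part you and the paper both induct on the number of conditions with the same base case; the inductive steps, however, are genuinely different. The paper argues by contradiction: if the first condition shared no successor with the subgraph, both of its out-edges would land on distinct \emph{internal} vertices, giving the $n$-condition sub-decision an incoming edge that bypasses its entry vertex, contradicting the domination property of Lemma~\ref{lem:indegree}. You argue directly and semantically: you split on whether the top-level operator is $\scand$ or $\scor$, show that the short-circuit out-edge of $c_1$ must go to $s_F$ (resp.\ $s_T$), and observe that the residual decision $D'$ inherits exactly the two continuation targets $s_T,s_F$ before applying the induction hypothesis. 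Your version is more informative---it identifies \emph{which} external successor each condition reaches---at the cost of an explicit case analysis that the paper's domination argument sidesteps. One caution: both inductive steps rest on the decomposition $D = c_1 \mathbin{\mathrm{op}} D'$ inherited from the proof of Lemma~\ref{lem:indegree}, and both are equally exposed there. For a left-nested decision such as the paper's own worked example $((a\;\scand\;b)\;\scor\;c)$, the top-level split is decision-combined-with-condition, not condition-combined-with-decision, and the first condition $a$ has both of its successors ($b$ and $c$) \emph{inside} the subgraph, so neither your direct argument nor the paper's contradiction applies as written. You rightly flag the inheritance of $s_T,s_F$ as the delicate step, but the lexical decomposition itself is at least as delicate; this is a weakness you share with the paper rather than a gap peculiar to your proof.
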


\begin{proof}
The first part of the statement follows directly from lemmas~\ref{lem:outdegree}~and~\ref{lem:indegree}.

The second part can be shown by induction on the number of conditions. The base case is a decision with one condition. The statement is trivially true in this case since the decision subgraph contains a single vertex representing the condition, so they must share all successors.

As induction hypothesis, we assume that the statement is true for decisions consisting of $n$ conditions.

We will show the induction step by contradiction.
Assume a decision with $n+1$ conditions such that the first condition shares no successors with the decision subgraph.
Since the decision can only have two successors and the final condition's successors are those two, the first condition's successors must be vertices within the decision subgraph.
Since, by definition, a condition has two successors, the two vertices within the decision must be different.
This contradicts Lemma~\ref{lem:indegree} that says the decision with $n$ conditions must be dominated by a single vertex, as it would add an incoming edge to a vertex in the decision that does not pass through the entry vertex.
\end{proof}
%

\subsection{Algorithm}
\label{sec:algorithm}

Figure~\ref{fig:runs} shows a CFG with one decision including an $\scand$ operator.
In the CFG, there are two decisions, $a$ and $b$, that must be true to reach the vertex with $x = 1$.
If either is \emph{false}, the outgoing edge leads to the \emph{same} vertex containing $\text{return } x$.
This same shape is true for any $\scand$ decision; any condition that is \emph{false} leads to the same successor vertex.
Crucially, $\scor$ decisions are shaped identically, except that the equal successor vertex is reached on a \emph{true} condition.
This leads to the algorithms below.
Note that we overload the $\textit{successor} : \powerset{\mathcal{V}} \rightarrow \powerset{\mathcal{V}}$ function for sets of vertices that represent decision subgraphs, defined as $\textit{successor}(D) = \{ s \in V : (c,s) \in E \wedge c \in D \}$ for a CFDG $(G,V),\Ds$.

The procedure for annotating a CFG to create a CFDG is given in Algorithms~\ref{alg:init}~and~\ref{alg:alg}.
The initial procedure is given in Algorithm~\ref{alg:init}, which takes a CFG as its input and returns the CFDG.
The algorithm first sets up a data structure to ensure that vertices are only visited once, which is necessary in case of loops.
It then sets up a data structure $D_{\text{map}}$ that maps condition vertices to a set of vertices that represents a decision.
Then, the algorithm iterates over these sets, calling \textit{merge} on each one, so long as the vertex that maps to it has not yet been visited.
Note that the parameters, especially \textit{visited} and $D_{\text{map}}$, should be considered passed-by-reference.

Algorithm~\ref{alg:alg} works by recursing over subsequent decisions, merging each one reverse order and returning the successor vertices of the decision.
On Line~\ref{alg:succloop}, the successors of the decision $D_1$ are iterated over.  This should be understood as the union of all the successors to the vertices in $D_1$ (a set).
If a successor has not been visited, it is checked to see if it is a condition.
If it is, then its current decision is looked up in the map and is recursed on, getting the set of its successors from the return.
Then, on Line~\ref{alg:test}, if the successors returned by the recursive call contain a shared successor with $D_1$ (not the current successor we iterated over), then the decisions are merged, and the map is updated for all contained vertices.

\begin{algorithm}[ht]
\caption{CFDG Creation Algorithm}
\label{alg:init}
\begin{algorithmic}[1]
\Procedure{createCFDG( $(V,E)$ )}{} \label{alg:create}
\State $\textit{visited} \gets \{ v \mapsto \textit{false} : v \in V \}$
\State{$D_{\text{map}} \gets \{ v \mapsto \{v\} : v \in V \where \textit{outdegree}(v) = 2 \}$}
\For{$v \mapsto D_1 \in V$}
  \If{\textbf{not} \textit{visited}($v$)}
    \State{\textit{merge}( $D_1, D_{\text{map}}, \textit{visited}$ )}
  \EndIf
\EndFor \\
\Return{$(V,E), \{ \Ds : v \mapsto \Ds \in D_{\text{map}} \}$}
\EndProcedure
\end{algorithmic}
\end{algorithm}

\begin{algorithm}[ht]
\caption{Decision Merging Algorithm}
\label{alg:alg}
\begin{algorithmic}[1]
\Procedure{merge( $D_1, D_{\text{map}}, \textit{visited}$ )}{} \label{alg:merge}
\For{$\textit{s} \in \textit{successors}(D_1)$} \Comment{$s$ is a vertex} \label{alg:succloop}
  \If{\textbf{not} \textit{visited}($s$)} 
    \State{\textit{visited}($s$) $\gets \textit{true}$}
    \If{\textit{outdegree}($s$) = 2} \Comment{$s$ is a condition}
      \State{$D_2 \gets D_{\text{map}}(s)$}
      \State{$S \gets $ \textit{merge}( $D_2, D_{\text{map}}, \textit{visited}$ )} \label{alg:recurse}
      \If{$\textit{successors}(D_1) \setminus s \cap S \neq \varnothing$} \label{alg:test}
        \State{$D_1 \gets D_1 \cup D_2$} \Comment{merge!}
        \For{$v \mapsto \_ \in D_1$}
          \State{$D_{\text{map}}(v) \gets D_1$}
        \EndFor
      \EndIf
    \EndIf
  \EndIf
\EndFor \\
\Return $\textit{successors}(D_1)$
\EndProcedure
\end{algorithmic}
\end{algorithm}

\subsection{Example of constructing a CFDG}

\begin{lstlisting}[language=C,frame=single,caption=One decision with three conditions,label=lst:three]
  x = ((a && b) || c);
\end{lstlisting}

Consider the program in Listing~\ref{lst:three}, which contains one decision with a short-circuit AND as well as a short-circuit OR, and three conditions.  Figure \ref{fig:cfdg1} shows the CFG for this program where the vertices representing the conditions are shown with rounded corners.

\begin{figure}[ht]
\centering
  \subfloat[][Step 1]{
    \includegraphics[width=.13\textwidth]{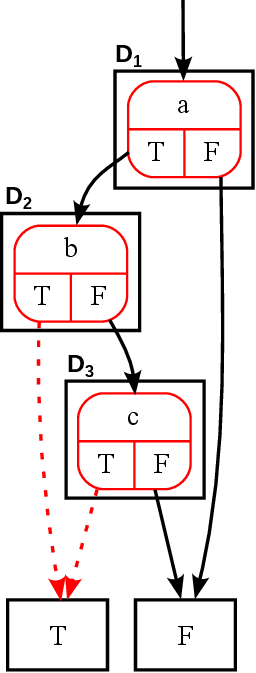}
    \label{fig:cfdg1}
  }
  \subfloat[][Step 2]{
    \includegraphics[width=.13\textwidth]{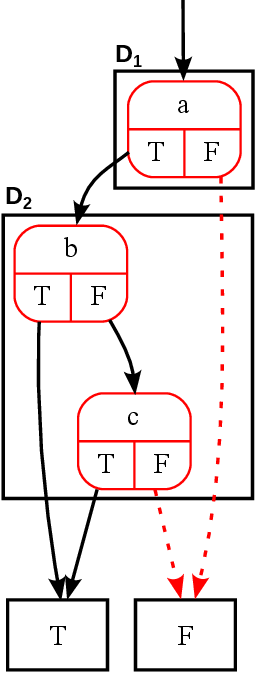}
    \label{fig:cfdg2}
  }
  \subfloat[][Step 3]{
    \includegraphics[width=.13\textwidth]{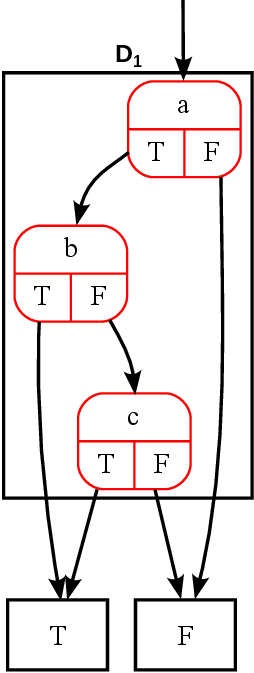}
    \label{fig:cfdg3}
  }
\caption{Steps of Algorithms~\ref{alg:init}~and~\ref{alg:alg} on the CFG for Listing~\ref{lst:three}}
\label{fig:cfdg}
\end{figure}

At the beginning of the algorithm, Algorithm~\ref{alg:init} iterates over the conditions, forming the decision subgraphs $D_i$ for $i \in \{1,2,3\}$, which are represented by black, solid-line boxes around the condition vertices.
The algorithm then calls \textit{merge} on Decision~$D_1$.

Algorithm~\ref{alg:alg} then recurses forward in the CFG, until it reaches the last decision $D_3$ containing the condition vertex, $c$.
The recursive call to \textit{merge} on $D_3$ (Line~\ref{alg:recurse}) returns the set of its successor vertices, $\{T,F\}$.
This set is then compared to the successors of $D_2$, subtracting the entry vertex of $D_3$, $c$.
The intersection of those two successor sets is non-empty, since the $T$ vertex is shared by both sets.
This is shown in Figure~\ref{fig:cfdg1}, where the two edges pointing at the shared successor are highlighted in red and dashed.
As a result of this shared successor, $D_2$ and $D_3$ are merged, and the recursive call to \textit{merge} returns the successors of that merged decision, $\{T,F\}$.
Like before, and as shown in Figure~\ref{fig:cfdg2}, the successors of $D_1$ minus $b$ are compared to $\{T,F\}$ with the intersection found non-empty.
As such, $D_1$ and $D_2$ are merged, with the final decision shown in Figure~\ref{fig:cfdg3}.

\subsection{Correctness and Complexity}

Algorithm~\ref{alg:init} on its own is trivially correct, \emph{assuming that all conditions are represented as single vertices in a CFG}, but we must argue for the correctness of Algorithm~\ref{alg:alg}.

\begin{theorem}[Soundness]
Algorithm~\ref{alg:alg} is sound.
\label{thm:sound}
\end{theorem}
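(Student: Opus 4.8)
The plan is to read \emph{soundness} as the statement that the algorithm never over-merges: every vertex set that \textit{merge} writes into $D_{\text{map}}$ corresponds to an actual decision, i.e.\ it is a decision subgraph in the sense of Theorem~\ref{thm:successors} (single predecessor, two successors, every condition sharing a successor with the whole group); equivalently, whenever Line~\ref{alg:test} fires and sets $D_1 \gets D_1 \cup D_2$, the groups $D_1$ and $D_2$ really are the two operands of a short-circuit operator ($\scand$ or $\scor$). I would prove this by induction on the recursion depth of \textit{merge}. The recursion is well founded: each call marks the successors it inspects as \textit{visited} and the \textit{visited} map is never reset, so only finitely many nested calls occur; this also yields the termination the algorithm needs in the presence of loops.

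The base case is a \textit{merge}$(D_1,\dots)$ call that never fires Line~\ref{alg:test}: it leaves $D_1$ unchanged, and at the top level (Algorithm~\ref{alg:init}) the group passed in is a singleton condition vertex, trivially a decision subgraph. For the inductive step, consider the state at Line~\ref{alg:test}: a successor $s$ of $D_1$ with $\textit{outdegree}(s)=2$ has been chosen, $D_2 = D_{\text{map}}(s)$ has been set, and $S$ has been returned by the recursive call on Line~\ref{alg:recurse}. By the induction hypothesis, after that call $D_2$ is a decision subgraph with dominating entry vertex $s$ and successor set $S$; and, maintaining the invariant along the current chain of \textit{merge} calls, $D_1$ is likewise a (possibly already-grown) decision subgraph, one of whose two successors is $s$. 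The test fires exactly when there is a vertex $z \neq s$ that is a successor of $D_1$ and also lies in $S$.

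The heart of the proof is then a structural claim: such a shared vertex $z$ exists precisely when $D_1$ and $D_2$ are joined by a short-circuit operator, in which case $D_1 \cup D_2$ is again a decision subgraph, with entry $\textit{entry}(D_1)$, the same predecessor as $D_1$, and (by the shared-successor property of Theorem~\ref{thm:successors}) its successors collapsing back to two. One direction is the observation stated just before the algorithm: in $D_1 \scand D_2$ the ``false'' exits of the two groups coincide and in $D_1 \scor D_2$ the ``true'' exits coincide, and this coinciding vertex is the witness $z$. For the converse I would argue from the structural restrictions already in hand: the only predecessor of $s$ is $D_1$'s conditional exit, because Lemma~\ref{lem:indegree} forbids branching into the interior of a decision and the ``no interstitial vertices'' assumption of Section~\ref{sec:model} forbids anything sitting between the two groups; this forces the $D_1 \to s$ edge to be the ``evaluate the next operand'' edge and $z$ to be the short-circuit target of one combined decision. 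Together with the $D_{\text{map}}$ update loop that re-labels every vertex of the merged group, this shows the merge is sound.

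I expect this structural step --- upgrading the purely local test of Line~\ref{alg:test} into a global guarantee that $D_1$ and $D_2$ are lexically one decision, and in particular ruling out an \emph{accidental} coincidence of a successor between two genuinely separate decisions --- to be the main obstacle; it is also where the Section~\ref{sec:model} assumptions (single-vertex conditions, no interstitial vertices, well-structured control flow) do the real work, since a \verb+goto+ to a label just after one decision that happens to be a branch target of a later decision could make the test fire spuriously. A secondary point to nail down carefully is the bookkeeping invariant that each decision group is produced by exactly one chain of \textit{merge} calls and stays a valid decision subgraph throughout, which depends on the interplay between the \textit{visited} guards and the re-labelling of $D_{\text{map}}$; I would treat that as routine but it does need checking.
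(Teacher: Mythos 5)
Your central step is precisely the paper's entire proof: when Line~\ref{alg:test} fires, the successors of $D_1$ are the (unique, by Lemma~\ref{lem:indegree}) entry vertex of $D_2$ together with one of $D_2$'s own successors, so $D_1 \cup D_2$ still has a single dominating entry vertex and the same two exits, i.e.\ it retains the shape that Theorem~\ref{thm:successors} demands of a decision subgraph. The paper stops there; it reads ``sound'' purely structurally (every merge produces a well-formed decision subgraph) and gives no induction, no termination argument, and no discussion of the recursion. Where you genuinely diverge is in proving the converse direction: that the shared-successor test cannot fire \emph{accidentally} for two lexically distinct decisions, so that every merged group corresponds to an actual short-circuit operator in the source. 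The paper never states or argues this; your appeal to Lemma~\ref{lem:indegree} and the no-interstitial-vertex assumption to pin down the sole predecessor of $s$ and force the $D_1 \to s$ edge to be the ``evaluate the next operand'' edge is extra work that buys a strictly stronger, semantic soundness claim (and your \texttt{goto} example correctly identifies why structured control flow is what makes it go through). Your induction on recursion depth and the bookkeeping invariant about $D_{\text{map}}$ relabelling are likewise absent from the paper but harmless and, for the stronger claim, necessary. In short, your proposal subsumes the paper's argument rather than departing from it; the ``accidental coincidence'' case you flag as the main obstacle is, if anything, a gap in the paper's own two-sentence proof, though your resolution of it is still a sketch and would need the structural lemma spelled out to be fully rigorous.
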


\begin{proof}
The proof of soundness for Algorithm is based on Theorem~\ref{thm:successors}, which says that all conditions in a decision must share a successor with the decision itself.
Algorithm~\ref{alg:alg} merges two decisions $D_1$ and $D_2$ when $D_1$'s successors are the entry vertex of $D_2$ (there is only one, by Lemma~\ref{lem:indegree}) and one of the successors of $D_2$.
This forms a new decision which is clearly sound, as it maintains the same successors, and still has only one entry vertex (that of $D_1$).
\end{proof}

\begin{theorem}[Completeness]
Assuming no interstitial vertices \emph{within a decision}, Algorithm~\ref{alg:alg} is complete.
\label{thm:complete}
\end{theorem}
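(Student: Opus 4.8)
The plan is to prove completeness as the complement of soundness: soundness (Theorem~\ref{thm:sound}) guarantees that every merge the algorithm performs is legitimate, so completeness requires showing that every legitimate merge eventually \emph{is} performed, i.e., that when the algorithm terminates, no two subgraphs in $\Ds$ that belong to the same source-level decision remain separate. I would argue this by induction on the number of conditions in a decision, mirroring the structure of the proof of Theorem~\ref{thm:successors}. The base case (a one-condition decision) is immediate, since Algorithm~\ref{alg:init} initializes $D_{\text{map}}$ with exactly the singleton $\{v\}$ for each condition vertex and no merge is needed.

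For the inductive step, I would consider a decision with $n+1$ conditions, which by the lexical rules decomposes as a leading condition $c_1$ combined by a short-circuit operator with a sub-decision $D'$ of $n$ conditions. By the induction hypothesis, once \textit{merge} is invoked on (any subgraph containing a vertex of) $D'$, it returns with $D'$ fully assembled and with return value $\textit{successors}(D')$, which by Theorem~\ref{thm:successors} is the common two-element successor set $\{T,F\}$ of the whole decision. The key local fact to establish is that the algorithm actually reaches this recursive call: starting from the singleton $\{c_1\}$, the loop on Line~\ref{alg:succloop} iterates over $\textit{successors}(\{c_1\})$, one of which is the entry vertex of $D'$ (this is where the ``no interstitial vertices'' hypothesis is used — it rules out a non-condition vertex sitting between $c_1$ and the entry of $D'$, which would make $\textit{outdegree}(s)=2$ fail and abort the recursion). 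That successor is a condition, has not been visited yet (it is dominated by $c_1$ within the decision, so the only way to reach it is through $c_1$, and $c_1$ is the current frame), so the recursive call on Line~\ref{alg:recurse} fires and returns $\{T,F\}$ as above. Then on Line~\ref{alg:test} the test $\textit{successors}(\{c_1\}) \setminus s \cap \{T,F\} \neq \varnothing$ holds, because $c_1$'s other successor is precisely one of $T$ or $F$ (the short-circuit target), so the merge executes and the $(n+1)$-condition decision is fully assembled. I would also note the bookkeeping: the \textit{visited} map and the ``$\textbf{not}\ \textit{visited}(v)$'' guard in Algorithm~\ref{alg:init} ensure that a decision which has already been merged as part of a recursive call is not re-processed as a fresh top-level subgraph, and that loops in the CFG do not cause nontermination or spurious re-merging.

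The main obstacle I expect is the second-order reachability argument — establishing that \textit{merge} is in fact called on every decision and that the recursion bottoms out at the last condition before unwinding. This requires care about the order of iteration in Algorithm~\ref{alg:init} (a decision may first be encountered as a nested recursive call rather than as a loop iterate) and about the precise claim that within a single decision the successor chain $c_1 \to c_2 \to \cdots \to c_{n+1}$ is walked without interruption; the ``no interstitial vertices'' assumption is doing exactly the work of guaranteeing that each $\textit{outdegree}(s) = 2$ check on Line~5 of Algorithm~\ref{alg:alg} succeeds along this chain. A secondary subtlety is that the test on Line~\ref{alg:test} must not fire spuriously between conditions of \emph{different} decisions; here I would invoke Lemma~\ref{lem:indegree} and Theorem~\ref{thm:successors} to argue that a shared successor other than the entry of the nested subgraph can only arise within one decision, so no over-merging occurs — but since that direction is really soundness, for completeness it suffices to observe that the needed merges are never skipped.
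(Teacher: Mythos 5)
Your proposal is correct and follows essentially the same route as the paper's proof: both reduce completeness to the guarantee of Theorem~\ref{thm:successors} that every condition in a decision shares a successor with the decision subgraph, so the test on Line~\ref{alg:test} necessarily fires for conditions that belong together. Your version is considerably more explicit than the paper's (which argues the contrapositive in two sentences), in particular about where the no-interstitial-vertices assumption enters --- the $\textit{outdegree}(s)=2$ check --- and about the \textit{visited} bookkeeping, both of which the paper leaves implicit.
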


\begin{proof}
By our assumption that every condition is represented as a single vertex, clearly Algorithm~\ref{alg:init} will create decisions for every condition.  
So, need only show that every pair of decisions will be merged if they are really in one decision.
It is clear that every decision is compared with its successor conditions to see if they belong in the same decision.
For a decision to not be merged, it must not share a successor with its predecessor decision, in which case it is not in the same decision by Theorem~\ref{thm:successors}.
\end{proof}

\begin{theorem}[Correctness]
Algorithm~\ref{alg:alg} is correct.
\label{thm:correct}
\end{theorem}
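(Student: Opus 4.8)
The plan is to obtain Theorem~\ref{thm:correct} as an immediate consequence of the two results just established, plus a short termination argument. Recall that ``correct'' here means: given a CFG in which every condition is a single vertex and no decision contains interstitial vertices, Algorithm~\ref{alg:alg} (invoked via Algorithm~\ref{alg:init}) produces a CFDG whose set of decision subgraphs is exactly the set of true decisions of the program --- no more and no less.

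First I would argue termination, so the statement is even well posed in the presence of loops. Algorithm~\ref{alg:alg} recurses on a successor vertex $s$ only after testing \textbf{not} \textit{visited}($s$) and then immediately setting \textit{visited}($s$) $\gets$ \textit{true}; since $V$ is finite and \textit{visited} is passed by reference, the total number of recursive \textit{merge} calls over the whole run of Algorithm~\ref{alg:init} is at most $|V|$, and each call performs a bounded amount of local work, so the procedure halts.

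Next I would invoke Theorem~\ref{thm:sound}: every merge performed joins $D_1$ with a successor decision $D_2$ whose unique entry vertex (unique by Lemma~\ref{lem:indegree}) is a successor of $D_1$ and which shares the other successor of $D_1$, so the union is again a legitimate decision subgraph with one entry vertex and two successors; hence the output contains only true decisions. Then I would invoke Theorem~\ref{thm:complete}: under the no-interstitial-vertex assumption every condition begins in its own singleton decision, and every decision is tested against each of its successor conditions on Line~\ref{alg:test}, which by Theorem~\ref{thm:successors} is exactly the criterion for two adjacent conditions to lie in a common decision; hence every pair of conditions that belongs to one decision is eventually merged. Soundness and completeness together pin down the output set of decision subgraphs uniquely, which is the claim.

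The one place the argument is not purely mechanical --- and thus the main obstacle --- is checking that the recursion order interacts correctly with the loop on Line~\ref{alg:succloop} and the in-place update of $D_{\text{map}}$: specifically, that merging in reverse order (the innermost/last decision first, as in the Listing~\ref{lst:three} example) means the outermost \textit{merge} call on a decision always sees, via the set $S$ returned on Line~\ref{alg:recurse}, the \emph{fully} merged successor decision before applying the Line~\ref{alg:test} test. I would discharge this by observing that each condition vertex is visited exactly once, and that by Lemma~\ref{lem:indegree} the entry vertex of a decision dominates that decision, so the first condition encountered along any run is the decision's entry; consequently the recursive descent reaches the last decision before unwinding, each returned successor set reflects all merges performed deeper in the recursion, and each $D_{\text{map}}$ update is visible to every later lookup --- which is precisely the configuration assumed in the proofs of Theorems~\ref{thm:sound}~and~\ref{thm:complete}, completing the argument.
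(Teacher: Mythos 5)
Your proposal takes essentially the same route as the paper: the paper's proof of Theorem~\ref{thm:correct} is simply the observation that correctness follows from Theorem~\ref{thm:sound} (soundness) together with Theorem~\ref{thm:complete} (completeness), which is the core of your argument. The additional termination argument and the discussion of recursion order and the in-place $D_{\text{map}}$ updates are reasonable supporting details that the paper omits, but they do not change the underlying decomposition.
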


\begin{proof}
By Theorem~\ref{thm:sound} Algorithm~\ref{alg:alg} is sound, and by Theorem~\ref{thm:complete} it is complete.
\end{proof}

\begin{theorem}[Complexity]
Algorithms~\ref{alg:init}~and~\ref{alg:alg} run in linear time complexity in the size of the CFG.
\end{theorem}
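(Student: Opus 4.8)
The plan is to charge the running time of \Alg{alg:init} and \Alg{alg:alg} to individual vertices and argue that each vertex is charged only a constant amount, giving a bound of $\bigOh{|V| + |E|}$; since every vertex has outdegree at most $2$ we have $|E| \le 2|V|$, so this is $\bigOh{|V|}$, i.e., linear in the size of the CFG. \Alg{alg:init} is the easy part: initializing \textit{visited} and $D_{\text{map}}$ each scans the vertex set once, and the \textbf{for} loop performs $|V|$ iterations doing $\bigOh{1}$ work apiece apart from the calls to \textit{merge}, whose total cost is accounted for below; the \textit{visited} guard ensures the body of \textit{merge} is never re-run on an already-processed vertex.

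For \Alg{alg:alg} the key invariant is that the inner \textbf{if} branch guarded by \textit{outdegree}($s$) $= 2$ --- and hence the recursive call on Line~\ref{alg:recurse} --- executes at most once per vertex $s$ over the whole run, because $s$ is marked visited on the line directly above it and is never re-entered. Combined with outdegree $\le 2$, this bounds both the number of iterations of the loop on Line~\ref{alg:succloop} and the number of recursive calls by $\bigOh{|V|}$. It then remains to bound the total cost of the merges. Each merge strictly decreases the number of distinct decision subgraphs, so at most $|V| - 1$ merges occur; moreover, whenever \textit{merge} is invoked recursively on $D_2 = D_{\text{map}}(s)$ the vertex $s$ is, by the invariant, still unvisited, so it has never been absorbed into another decision and $D_{\text{map}}(s)$ is still the initial singleton $\{s\}$. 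Hence each merge enlarges the accumulated decision by exactly one vertex, so the $D_{\text{map}}$-update loop need only rewrite that one new entry: $\bigOh{1}$ per merge and $\bigOh{|V|}$ overall. Summing the three contributions yields the linear bound.

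The step I expect to be the main obstacle is precisely this $D_{\text{map}}$-update loop. Taken literally it re-writes $D_{\text{map}}(v)$ for \emph{every} vertex $v$ of the merged decision $D_1$, so a decision of $k$ conditions joined by short-circuit operators --- which is built by $k-1$ successive merges --- would perform $2 + 3 + \dots + k$ writes, i.e., $\bigOh{k^2}$, and hence $\bigOh{|V|^2}$ in the worst case. The proof therefore has to rely on the singleton-$D_2$ observation above (equivalently: on each merge, fold the freshly discovered single-condition decision into the accumulated one and touch only its lone vertex, so each vertex's $D_{\text{map}}$ entry is written at most twice in total), which is what keeps the algorithm running in linear time.
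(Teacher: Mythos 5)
Your proposal is correct and, on the vertex/edge-charging part, matches the paper's own argument: the published proof consists entirely of the observations that the \textit{visited} structure lets each vertex be processed at most twice (once from the top-level loop, once as a successor) and that each edge is examined only a constant number of times via \textit{successors}. Where you genuinely diverge is that you also account for the cost of the merge step itself --- the union $D_1 \cup D_2$ and the loop that rewrites $D_{\text{map}}(v)$ for every $v$ in the merged decision --- which the paper's proof does not mention at all. You are right that this is the crux: taken literally, a chain of $k$ short-circuited conditions incurs $2+3+\dots+k$ map writes, so the algorithm as written is $\bigOh{|V|^2}$ in the worst case, and the stated linear bound only holds under an amortized accounting (each vertex's $D_{\text{map}}$ entry is written $\bigOh{1}$ times because one operand of every merge is a freshly discovered singleton that can be folded into the accumulated set). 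One small imprecision: the set that is guaranteed to be a singleton \emph{at the moment the merge on Line~\ref{alg:test} fires} is $D_1$ (the predecessor condition's decision), not $D_2$, since $D_2$ may have absorbed further conditions during the recursive call on Line~\ref{alg:recurse}; your parenthetical reformulation (``fold the lone new vertex into the accumulated decision and touch only that entry'') is the correct repair and yields the ``written at most twice'' bound. In short, your proof is more complete than the paper's: it supplies the missing analysis of the update loop that is actually needed for the theorem to be true as stated.
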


\begin{proof}
Because of the \textit{visited} data structure, clearly vertices in a CFG are visited at most twice.  This can happen if a vertex is the last condition in a decision and is visited from a call to \textit{merge} from Algorithm~\ref{alg:init} before being visited in a recursive call in Algorithm~\ref{alg:alg}.

Edges are only visited when they are outgoing edges from a vertex being visited (in calls to \textit{successors}, with a maximum of two edges visited per call and a maximum of two calls.
\end{proof}
\section{Test Coverage Criteria Definitions}
\label{sec:criteria}

\newcommand{\Program}{$\mathcal{P}$}
\newcommand{\TestCov}{\mathcal{T}}
\newcommand{\Runs}{R}

This section uses the CFDG model described in Section \ref{sec:model} to formally define common test coverage criteria.

Given a program \Program{}, its CFDG \((V,E), \Ds\), a function that defines the set of entry vertices of a graph \(\textit{entry}( V,E ) = \{v \in V : \nexists (\cdot,v) \in E\}\), and a function that defines a set of exit vertices of a graph \(\textit{exit}(V,E) = \{v \in V : \nexists (v,\cdot) \in E\}\), a test suite \(\TestCov\), and the set of test runs \(\Runs = \{\mathrm{Run}(G,t) : t \in \TestCov\}\), we define the following test coverage criteria.

\vskip 2mm
\subsection{Statement Coverage (SC)}

Statement coverage (SC) is the simplest criterion, and only requires that every statement in a program has been executed at least once~\cite{myers2011art}.

\(\TestCov\) meets 100\% SC requirements for \Program{} if it meets the following condition:

\begin{condition}[\(\TestCov\) contains test cases which visit all of the vertices]
\label{cond:sc}
\[\forall v \in V \where \exists \, \textbf{r} \in \Runs \, \where (\cdot,v) \in \textbf{r} \vee (v,\cdot) \in \textbf{r} \]
\end{condition}

\vskip 2mm
\subsection{Decision Coverage (DC)}

Decision coverage (DC) expands on SC by stipulating that every statement in a program has executed at least once, and every decision in the program has taken all possible outcomes at least once~\cite{myers2011art}.
Recall that, by Lemma~\ref{lem:outdegree}, decision subgraphs in CFDGs have only two successors.

\(\TestCov\) meets 100\% DC requirements for \Program{} if it meets Condition \ref{cond:sc}, and the following condition:

\begin{condition}[\(\TestCov\) contains test cases which visit both of the outgoing edges from every decision]
\label{cond:dc}
\[\forall D \in \Ds \where \exists \, \textbf{r} \in \Runs \where \exists v_1,s_1,v_2,s_2 \in V \where\] \[(v_1,s_1),(v_2,s_2) \in \textbf{r} \wedge s_1 \neq s_2 \wedge v_1,v_2 \in D \wedge s_1,s_2 \not\in D\]
\end{condition}

\vskip 2mm
\subsection{Condition Coverage (CC)}

Condition coverage (CC), also called branch coverage, is a stronger expansion on SC, and specifies that every statement in a program has been executed at least once, and every condition in every decision has taken all possible outcomes at least once~\cite{myers2011art}.

\(\TestCov\) meets 100\% CC requirements for \Program{} if it meets Condition \ref{cond:sc}, and the following condition:

\begin{condition}[\(\TestCov\) contains test cases which take both outgoing edges from every condition vertex in every decision subgraph]
\label{cond:cc}
\[\forall D \in \Ds \where \forall v \in D \where \exists \, \textbf{r} \in \Runs \where \exists s_1,s_2 \in V \where\] \[(v,s_1),(v,s_2) \in \textbf{r} \wedge s_1 \neq s_2\]
\end{condition}

\vskip 2mm
\subsection{Decision/Condition Coverage (D/CC)}

Decision/condition coverage (D/CC) combines the requirements of DC and CC.  D/CC specifies that every statement in a program has been executed at least once, every decision in the program has taken all possible outcomes at least once, and every condition in every decision has taken all possible outcomes at least once~\cite{myers2011art}.
Note that CC does not imply DC because changing condition outcomes may not imply changing decision outcomes.  To see this, suppose a decision $(c_1\; \text{\&\&}\; c_2)$.  If $c_1$ is varied while $c_2$ is held false, then the decision outcome does not change.

\(\TestCov\) meets 100\% D/CC coverage requirements for \Program{} if it meets conditions \ref{cond:sc}, \ref{cond:dc}, and \ref{cond:cc}.

\vskip 2mm
\subsection{Multiple Condition Coverage (MCC)}

Multiple condition coverage (MCC) is a still stronger criterion.  MCC requires that every statement in a program has been executed at least once, and all possible combinations of the conditions in every decision have been taken at least once~\cite{myers2011art}.

\(\TestCov\) meets 100\% MCC requirements for \Program{} if it meets conditions \ref{cond:sc}, and the following condition:

\begin{condition}[\(\TestCov\) contains test cases which, for every condition in every decision, vary only that condition]
\[\forall D \in \Ds \where \forall v \in D \where \exists \, \textbf{r}_1,\textbf{r}_2 \in \Runs \where \exists v,s_1,s_2,c,x \in V \where\]
\[(v,s_1) \in \textbf{r}_1 \wedge (v,s_2) \in \textbf{r}_2 \wedge s_1 \neq s_2\; \wedge\] \[\{ (c,x) \in \textbf{r}_1 : c \in D \wedge c \neq v \} = \{ (c,x) \in \textbf{r}_2 : c \in D \wedge c \neq v \}\]
\end{condition}

\vskip 2mm
\subsection{Full Predicate Coverage (FPC)}

Full predicate coverage (FPC) requires that every statement in a program has been executed at least once, and every condition in every decision has taken all possible outcomes where the outcome is directly correlated to the outcome of the decision.  This is a weaker criterion than MC/DC (introduced below) in that it does not require other conditions to be held fixed when varying a condition~\cite{offutt1999criteria, vilkomir2001formalization, vilkomir2001formalization2}.

\(\TestCov\) meets 100\% FPC requirements for \Program{} if it meets conditions \ref{cond:sc}, and the following condition:

\begin{condition}[\(\TestCov\) contains test cases which, for every condition in every decision, vary that condition and result in a change to the outcome of the decision]
\[\forall D \in \Ds \where \forall v \in D \where \exists \, \textbf{r}_1,\textbf{r}_2 \in \Runs \where \exists s_1,s_2,c,x_1,x_2 \in V \where\]
\[(v,s_1) \in \textbf{r}_1 \wedge (v,s_2) \in \textbf{r}_2 \wedge s_1 \neq s_2\; \wedge \] \[(c,x_1) \in \textbf{r}_1 \wedge (c,x_2) \in \textbf{r}_2 \wedge c \in D \wedge \{x_1,x2\} = \textit{successors}(D)\]
\end{condition}

\vskip 2mm
\subsection{Modified Condition/Decision Coverage (MC/DC)}

Modified condition/decision coverage (MC/DC) is a criterion for qualifying the completeness of tests used in development standards in the avionics and automotive industries, notably in DO-178B and ISO-26262.  It specifies that every entry and exit point to a program is tested, that every possible outcome of a decision is tested, that every possible result of every condition in a decision is tested, and that every condition in a decision independently affects the decision's outcome~\cite{mcdc, couverture, investigation}.

\(\TestCov\) meets 100\% MC/DC requirements for \Program{} if it meets conditions \ref{cond:dc}, \ref{cond:cc}, and the following conditions:

\begin{condition}[\(\TestCov\) contains a test case which visits the entry vertex]
\[\forall v \in \textit{entry}(V,E) \where \exists \, \textbf{r} \in \Runs \where (v,\cdot) \in \textbf{r}\]
\end{condition}

\begin{condition}[\(\TestCov\) contains test cases which visit all of the exit vertices]
\[\forall v \in \textit{exit}(V,E) \where \exists \, \textbf{r} \in \Runs \where (\cdot,v) \in \textbf{r}\]
\end{condition}

\begin{condition}[\(\TestCov\) contains test cases which, for every condition in every decision, vary \emph{only} that condition \emph{and} result in a change to the outcome of the decision]
\[\forall D \in \Ds \where \forall v \in D \where \exists \, \textbf{r}_1,\textbf{r}_2 \in \Runs \where \exists s_1,s_2,c,x,o,x_1,x_2 \in V \where\]
\[(v,s_1) \in \textbf{r}_1 \wedge (v,s_2) \in \textbf{r}_2 \wedge s_1 \neq s_2\; \wedge\] \[\{ (c,x) \in \textbf{r}_1 : c \in D \wedge c \neq v \} = \{ (c,x) \in \textbf{r}_2 : c \in D \wedge c \neq v \}\]
\[\wedge (o,x_1) \in \textbf{r}_1 \wedge (o,x_2) \in \textbf{r}_2 \wedge o \in D \wedge \{x_1,x_2\} = \textit{successors}(D)\]
\end{condition}
\vskip 5mm



\section{A Tool to Annotate CFGs}
\label{sec:tool}

\begin{figure}[b!]
\centering
    \includegraphics[width=.4\textwidth]{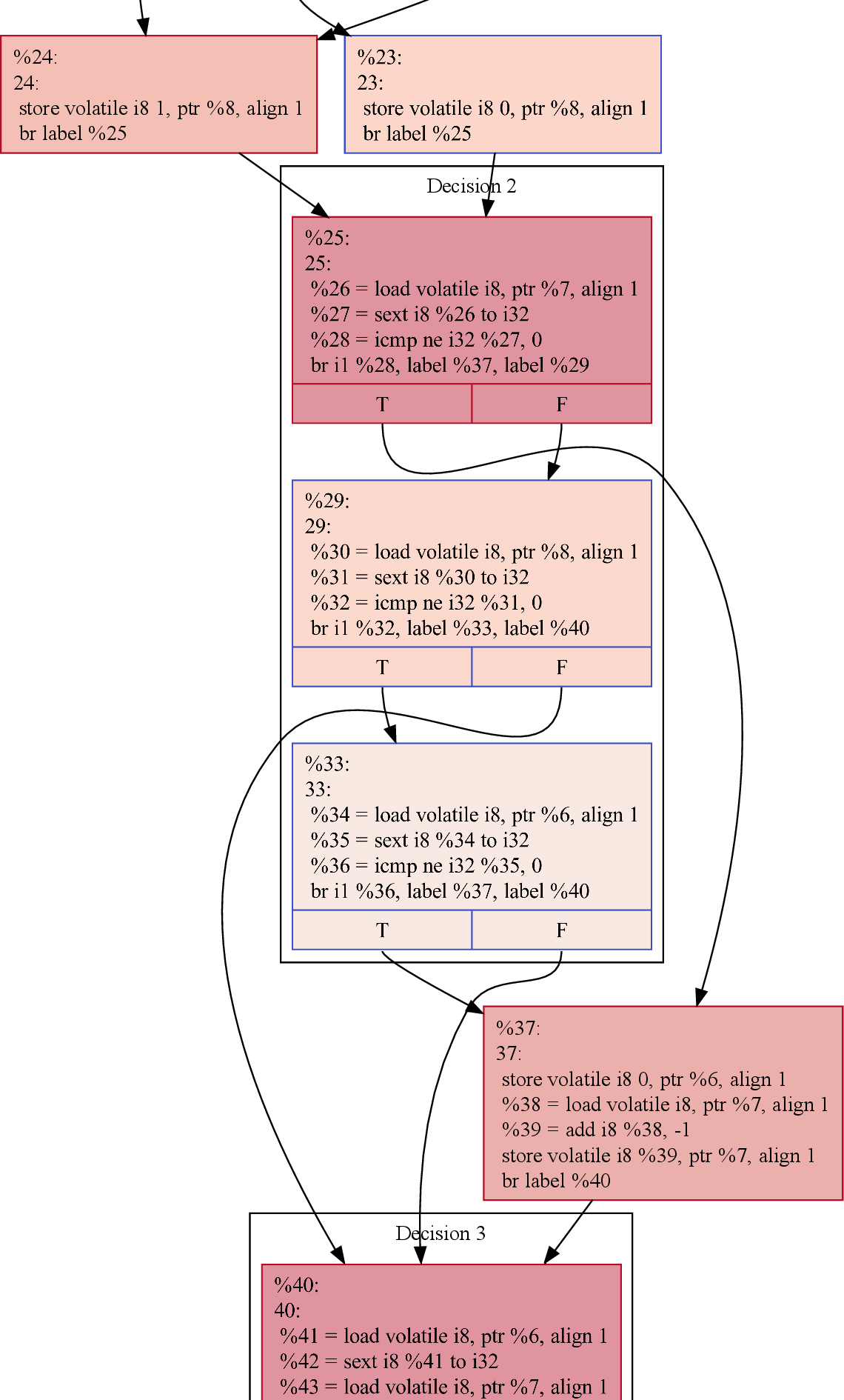}
\caption{Parts of a visualized CFG output from Clang annotated using cfg2cfdg}
\label{fig:clang}
\end{figure}

We implemented the algorithms from Section~\ref{sec:algorithm} in an open-source tool available at~\cite{tool} called \emph{cfg2cfdg}.
The tool is written in Python and uses the PyGraphViz~\cite{pygraphviz} library to read and write GraphViz files in \emph{dot} format.
GraphViz dot files are a common format for representing graphs for analysis and visualization, including for CFGs.
The popular compilers GNU Compiler Collection (GCC) and Clang are able to write GraphViz dot files containing CFGs for functions written in their supported languages and cfg2cfdg can annotate these files with decision information.

\begin{figure}[h!]
\centering
    \includegraphics[width=.4\textwidth]{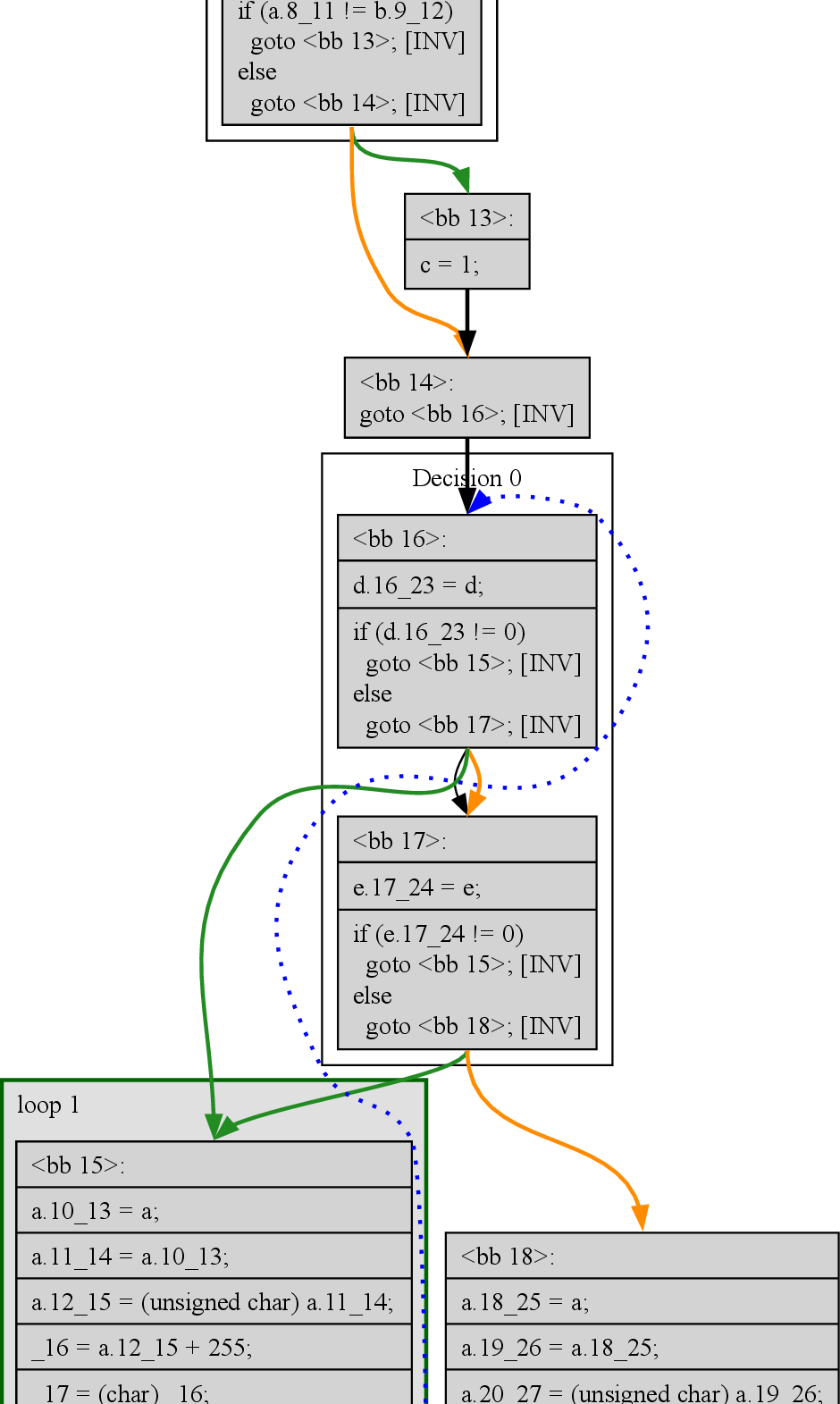}
\caption{Parts of a visualized CFG output from GCC annotated using cfg2cfdg}
\label{fig:gcc}
\end{figure}

Visualizations of annotated outputs from both compilers are shown in Figures~\ref{fig:clang}~and~\ref{fig:gcc}.
In the figure, vertices in the CFDG are shown as boxes, with arrows showing the edges.
The decision annotations are visible in both graphs as boxes around the vertices they encapsulate, with the labels \hbox{\textbf{Decision \#}}.

Figure~\ref{fig:clang} shows part of a CFDG from Clang with a complex decision (Decision~2) with three conditions combined with an $\scand$ and an $\scor$ operator similar to Listing~\ref{lst:three}.
Clang names each basic block in the CFG with a comment on the first line in the form \texttt{\%n} before the label \texttt{n:}.
In the figure, Decision~2 has two entry vertices, \texttt{\%24} and \texttt{\%23}, and two successor vertices, \texttt{\%37} and \texttt{\%40}.
Decision~2 can be understood as a conditional on $\texttt{\%25}\; \scor\; (\texttt{\%29}\; \scand\; \texttt{\%33})$, where basic-block \texttt{\%37} is reached if the decision is true, with execution continuing to \texttt{\%40} (no \textbf{else} block).

Figure~\ref{fig:gcc} shows part of a CFDG from GCC with a loop where Decision~0 has two conditions, where the back edge is a dotted blue line.
GCC names each basic block by its label: \texttt{<bb n>:}.
In the figure, Decision~0 has two entry vertices, \texttt{<bb 14>} and \texttt{<bb 15>} (also labeled \texttt{loop 1}), which has a back-edge creating the loop, and two successor vertices, \texttt{<bb 15>} and \texttt{<bb 18>}.
Decision~0 can be understood as a while loop on $\texttt{<bb 16>}\; \scor\; \texttt{<bb 17>}$.
When the loop terminates, it continues with \texttt{<bb 18>}.

\subsection{Challenges}

Working with the output from these tools means relying on their specific interpretations of how CFGs should behave, which does not always align with expectations from the theory.
One challenge we encountered was that Clang adds interstitial vertices to loops, separating \emph{guards} and \emph{latches}.
This means that decisions do not meet the assumptions stated in Section~\ref{sec:model} and, as such, the algorithm is not complete and decisions are not properly merged.
GCC, on the other hand, treats loops as we expected, and our tool has no trouble adding decisions around loop decisions as in Figure~\ref{fig:gcc}.

Using the tool is generally simple, requiring only the dot filename as input and immediately writing out a new dot file with decision annotations.
Visualization of the new CFDG using open-source GraphViz software is generally easy, but we did find that in some circumstances it was necessary to edit the files before visualization.
For example, GCC can use subgraphs in the file to annotate function boundaries, and this can interfere with visualizing decisions.
We found that editing the function subgraph names to remove the keyword \emph{cluster} solved the problem.

\section{Conclusion}
\label{sec:conclusion}

We have proposed a model, called a Control-Flow Decision Graph (CFDG), for formalizing and reasoning about control flow coverage criteria.  We introduced an algorithm to construct a CFDG from a CFG and proved its correctness and complexity, and then implemented that algorithm in an open-source tool for annotating CFGs from GCC and Clang~\cite{tool}.
We also formalize common test coverage criteria using the CFDG model, showing how explicit information about decisions in a CFG can help with program understanding and analysis.

It is our hope that this work will provide the basis for further insight into the relationship of decisions to program logic and control-flow properties.

\bibliographystyle{IEEEtran}
\bibliography{paper}

\end{document}